\newtheorem{theorem}{Theorem}[section]
\newtheorem{lemma}[theorem]{Lemma}
\newtheorem{prop}[theorem]{Proposition}
\theoremstyle{definition}
\newtheorem{example}[theorem]{Example}
\theoremstyle{remark}
\newcommand{\F}{\mathbb F_q^*}
\newcommand{\Fq}{\mathbb F_q}
\newcommand{\Z}{\mathbb Z}
\begin{document}

\begin{frontmatter}

%% Title, authors and addresses

%% use the tnoteref command within \title for footnotes;
%% use the tnotetext command for theassociated footnote;
%% use the fnref command within \author or \address for footnotes;
%% use the fntext command for theassociated footnote;
%% use the corref command within \author for corresponding author footnotes;
%% use the cortext command for theassociated footnote;
%% use the ead command for the email address,
%% and the form \ead[url] for the home page:
%% \title{Title\tnoteref{label1}}
%% \tnotetext[label1]{}
%% \author{Name\corref{c\fnref{label2}}
%% \ead{email address}
%% \ead[url]{home page}
%% \fntext[label2]{}
%% \cortext[cor1]{}
%% \address{Address\fnref{label3}}
%% \fntext[label3]{}

\title{Toric Codes, Multiplicative Structure and Decoding}
%% use optional labels to link authors explicitly to addresses:
%% \author[label1,label2]{}
%% \address[label1]{}
%% \address[label2]{}

\author{Johan P. Hansen\fnref{label1}}
\ead{matjph@math.au.dk}
%% \ead[url]{home page}
 \fntext[label2]{This work was supported by the Danish Council for Independent
Research, grant no. DFF-4002-00367}
%% \cortext[cor1]{}
%\address{Ny Munkegade\fnref{Ny Munkegade}}
%% \fntext[label3]{}

\address{Department of Mathematics, Aarhus University}

\begin{abstract}
Long linear codes constructed from  toric varieties over  finite fields, their multiplicative structure and decoding.

The main theme is the inherent multiplicative structure on toric codes. The multiplicative structure allows for \emph{decoding},  resembling the decoding of Reed-Solomon codes and aligns with decoding by error correcting pairs.

We have used the multiplicative structure on toric codes to construct linear secret sharing schemes with \emph{strong multiplication} via Massey's construction generalizing the Shamir Linear secret sharing shemes constructed from Reed-Solomon codes. We have constructed quantum error correcting codes from toric surfaces by the Calderbank-Shor-Steane method.

\end{abstract}

\begin{keyword} Toric varieties, toric codes, decoding, error-correcting pairs, secret sharing
\MSC[2010]{11H71, 11T71, 14M25, 14G50, 68P30, 94A60, 94A62, 98B35}
%% keywords here, in the form: keyword \sep keyword

%% PACS codes here, in the form: \PACS code \sep code

%% MSC codes here, in the form: \MSC code \sep code
%% or \MSC[2008] code \sep code (2000 is the default)

\end{keyword}

\end{frontmatter}

%% \linenumbers

%% main text
\section{Toric varieties and codes}
\label{}

In \cite{6ffeb030f4f511dd8f9a000ea68e967b}, \cite{39bd8e90f4f211dd8f9a000ea68e967b} and \cite{53ee7c6020b511dcbee902004c4f4f50} we introduced linear codes from toric varieties and estimated the minimum distance of such codes using intersection theory. Our method to estimate the minimum distance of toric codes has subsequently been supplemented, e.g., \cite{MR2272243}, \cite{MR2476837}, \cite{MR2322944}, \cite{MR2360532}, \cite{Beelen},\cite{MR3093852} \cite{MR3345095},  and \cite{DBLP:journals/corr/Little15}.

Toric codes have an inherent multiplicative structure. 

We utilize the multiplicative structure to \emph{decode} toric codes , resembling the decoding of Reed-Solomon codes and decoding by error correcting pairs, see R. Pellikaan \cite{MR1181934} , R. K{\"o}tter \cite{RK} and  I. M\'arquez-Corbella and R. Pellikaan  Ruud\cite{MR3502016}.

The multiplicative structure on toric codes gives rise to linear secret sharing schemes with the strong multiplication property. We presented this in \cite{Hansen}  using the construction of J. L. Massey in \cite{MR2017562} and  \cite[Section 4.1]{MR2449216}. 

In \cite{MR3015727} we used toric codes  to construct quantum error correcting codes by the Calderbank-Shor-Steane method, see \cite{Calderbank19961098} and \cite{Steane19992492}.

\subsection{The construction of toric codes}
Let $\square \subset \mathbb R^r$ be an integral convex polytope.
Let $M \simeq \Z^r$ be the free $\Z$-module of rank $r$ over the integers $\Z$. For $U=\square \cap M \subseteq M$, let
$\Fq<U>$ be the linear span in $\Fq[X_1^{\pm 1},\dots, X_r^{\pm 1}]$ of the monomials
\begin{equation}
\{X^u=X_1^{u_1}\cdot \dots \cdot X_r^{u_r} \vert \ u=(u_1,\dots,u_r)\in U\}\ .
\end{equation}
This is a $\Fq$-vector space of dimension equal to the number of elements in $U$.

Let $T(\Fq)=(\F)^r$ be the $\Fq$-rational points on the torus and let $S \subseteq T(\Fq)$ be any subset. The linear map that evaluates elements in $\Fq<U>$ at all the points in $S$ is denoted by $\pi_S$:
\begin{eqnarray*}
\pi_S:\Fq<U>&\rightarrow & \Fq^{\vert S\vert}\\
 f&\mapsto&(f(P))_{P\in S}\ .
\end{eqnarray*}
In this notation $\pi_{\{P\}}(f)=f(P)$. 

Evaluting at all points in the torus  $T(\Fq)$, the 
\emph{toric code} is obtained as  the image $C=\pi_{T(\Fq)}(\Fq<U>) \subseteq  \Fq^{\vert T(\Fq)\vert} $.

\subsection{Multiplicative structure}
Toric codes inherit a certain multiplicative structure, which we used in \cite{Hansen} to obtain LSSS with strong multiplication.

Let $\square$ and $\tilde{\square}$ be polyhedra in $\mathbb R^r$, let $\square +\tilde{\square}$ denote their Minkowski sum. Let
$U=\square \cap \mathbb Z^r$ and $\tilde{U}=\tilde{\square} \cap \mathbb Z^r$. 
The map
\begin{eqnarray*}
\Fq<U> \oplus \Fq<\tilde{U}> &\rightarrow & \Fq<U+\tilde{U}>\\
 (f,g)&\mapsto&f \cdot g\ .
\end{eqnarray*}
induces a multiplication on the associated toric codes
\begin{eqnarray*}
C_{\square} \oplus C_{\tilde{\square}}  &\rightarrow & C_{\square+\tilde{\square}}\\
 (c,\tilde{c})&\mapsto&c \star \tilde{c}
\end{eqnarray*}
with coordinatewise multiplication of the codewords - the \emph{Schur} product.

\section{Multiplicative structure and decoding}

Our goal is to use the multiplicative structure to correct $t$ errors on the toric code $C_{\square}$.

This is achieved choosing another toric code $C_{\tilde{\square}}$ that helps to reduce error-correcting to a \emph{linear} problem.

Let $\square$ and $\tilde{\square}$ be polyhedra as above in $\mathbb R^2$, let $\square +\tilde{\square}$ denote their Minkowski sum.
Assume from now on:
\begin{itemize}
\item[\emph i)] $\vert \tilde{U}\vert >t$, where $\tilde{U}=\tilde{\square} \cap \mathbb Z^2$ 
\item[\emph{ii})] $d(C_{\square+\tilde{\square}}) > t$, where $d(C_{\square+\tilde{\square}})$ is the minimum distance of $C_{\square+\tilde{\square}}$.
\item[\emph{iii})]$d(C_{\tilde{\square}}) > n-d(C_{\square})$, where $d(C_{\square})$ and $d(C_{\tilde{\square}})$ are the minimum distances of $C_{\square}$ and $C_{\tilde{\square}}$.
\end{itemize}

\subsection{Error-locating}
Let the received word be $y(P)=f(P)+e(P)$ for $P\in T(\Fq)$, with $f \in \Fq<U>$ and error $e$ of Hamming-weight at most $t$ with support $T\subseteq T(\Fq)$, such that $\vert T \vert \leq t$.

From \emph{i)}, it follows that there is a $g \in \Fq<\tilde{U}>$, such that $g_{\vert T}=0$ - an \emph{error-locator}.
To find $g$, consider the linear map:
\begin{eqnarray}\label{map}
\Fq<\tilde{U}>\ \oplus\  \Fq<U+\tilde{U}>  &\rightarrow & \Fq^n\label{map}\\
 (g,h)&\mapsto&\left( g(P)y(P)-h(P)\right) _{P\in T(\Fq)}
\end{eqnarray}

As $y(P)-f(P)=0$ for $P \notin T$ (recall that the support of the error $e$ is $T$), we have that
$g(P)y(P)-(g\cdot f )(P)=0$ for all $P \in T(\Fq)$. That is $(g,h=g\cdot f)$ is in the kernel of (\ref{map}).

\begin{lemma}
Let $(g,h)$ be in the kernel of (\ref{map}). Then $g\vert T =0$ and $h=g\cdot f$.
\end{lemma}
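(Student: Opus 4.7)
The plan is to unpack the kernel condition and reduce the claim to an application of hypothesis \emph{ii)} on the minimum distance. First I would write out what $(g,h) \in \Fq<\tilde{U}> \oplus \Fq<U+\tilde{U}>$ being in the kernel of (\ref{map}) means: $g(P)y(P) = h(P)$ for every $P \in T(\Fq)$. Substituting $y(P) = f(P) + e(P)$ rearranges this to
\[
h(P) - (g\cdot f)(P) \;=\; g(P)\,e(P) \qquad \text{for all } P \in T(\Fq).
\]

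Next I would observe that $g\cdot f \in \Fq<U+\tilde{U}>$ by the multiplicative structure, so $h - g\cdot f$ lies in $\Fq<U+\tilde{U}>$ and its evaluation $\pi_{T(\Fq)}(h - g\cdot f)$ is a codeword of $C_{\square+\tilde{\square}}$. Since the support of $e$ is contained in $T$, the right-hand side $g(P)e(P)$ vanishes at every $P \notin T$, so this codeword has Hamming weight at most $|T| \leq t$. The key step is then to invoke hypothesis \emph{ii)}, namely $d(C_{\square+\tilde{\square}}) > t$: the only codeword of $C_{\square+\tilde{\square}}$ of weight at most $t$ is $0$, so $\pi_{T(\Fq)}(h) = \pi_{T(\Fq)}(g\cdot f)$. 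This gives the second conclusion $h = g\cdot f$ (as codewords, which is what is used in the decoding).

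Finally, substituting this equality back into the displayed identity yields $g(P)e(P) = 0$ for every $P \in T(\Fq)$. For $P \in T$ the definition of the support of $e$ forces $e(P)\neq 0$, hence $g(P) = 0$, which is exactly $g|_T = 0$.

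I do not foresee a real obstacle: the whole argument is a direct consequence of the minimum-distance hypothesis once the kernel condition has been rewritten in the form above. The only conceptual subtlety is that $h = g\cdot f$ is most naturally read as equality of evaluated codewords rather than of polynomial representatives; this is exactly what the error-locating step requires, and it is precisely what hypothesis \emph{ii)} delivers.
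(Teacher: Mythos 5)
Your proposal is correct and follows essentially the same route as the paper: rewrite the kernel condition as $g(P)e(P)=h(P)-(g\cdot f)(P)$, note the left side has weight at most $t$ while the right side is a codeword of $C_{\square+\tilde{\square}}$, and invoke hypothesis \emph{ii)} to conclude both vanish. You are slightly more explicit than the paper in deducing $g\vert_T=0$ from $e(P)\neq 0$ on $T$ and in flagging that $h=g\cdot f$ is an equality of evaluated codewords, but the argument is the same.
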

\begin{proof} 
\begin{equation}
e(P)=y(P)-f(P)\quad \mathrm{for}\quad P\in T(\Fq)
\end{equation}
Coordinate wise multiplication yields by (\ref{map})
\begin{eqnarray*}
g(P)e(P)=&g(P)y(P)-g(P)f(P)\\=&h(P)-g(P)f(P)
\end{eqnarray*}
for $P\in T(\Fq)$.
The left hand side has Hamming weight at most $t$, the right hand side is a code word in $C_{\square+\tilde{\square}}$ with minimal distance strictly larger than $t$ by assumption \emph{ii)}. Therefore both sides equal 0.
\end{proof}
\subsection{Error-correcting}
\begin{lemma}
Let $(g,h)$ be in the kernel of (\ref{map}) with $g\vert T =0$ and $g \neq 0$. There is a unique $f$ such that $h=g\cdot f$.
\end{lemma}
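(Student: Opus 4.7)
The plan is to handle existence and uniqueness separately, leaning on the preceding lemma for the former and on hypothesis (iii) for the latter.

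For existence, I would simply exhibit the original polynomial $f\in\Fq<U>$ behind the received word $y=f+e$. For $P\notin T$ the error vanishes, so membership of $(g,h)$ in the kernel of (\ref{map}) gives $h(P)=g(P)y(P)=g(P)f(P)$; for $P\in T$ the hypothesis $g\vert T=0$ forces both sides to vanish. Hence $h=g\cdot f$ pointwise on $T(\Fq)$, which is exactly the conclusion the previous lemma already records. So existence is essentially free.

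The substantive step is uniqueness. Suppose $f_1,f_2\in\Fq<U>$ both satisfy $h=g\cdot f_i$. Subtracting yields $g(P)(f_1(P)-f_2(P))=0$ for every $P\in T(\Fq)$, so the codeword $\pi_{T(\Fq)}(f_1-f_2)\in C_{\square}$ is supported inside the zero set of $g$. Since $g$ determines a nonzero codeword of $C_{\tilde{\square}}$, its weight is at least $d(C_{\tilde{\square}})$, which leaves at most $n-d(C_{\tilde{\square}})$ points of $T(\Fq)$ where $g$ vanishes. Therefore $\pi_{T(\Fq)}(f_1-f_2)$ has weight at most $n-d(C_{\tilde{\square}})$, and hypothesis (iii) makes this strictly less than $d(C_{\square})$; a codeword of $C_{\square}$ with weight below the minimum distance must be zero, so $f_1$ and $f_2$ agree as codewords.

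The main subtlety worth flagging is the reading of ``unique $f$'': the argument establishes uniqueness of the associated codeword $\pi_{T(\Fq)}(f)\in C_{\square}$, which is exactly what the decoder needs to output. Hypothesis (iii) is calibrated precisely for this step, ruling out the scenario in which $g$ vanishes on so many points that two distinct codewords of $C_{\square}$ could both be coordinatewise-divisible by $g$ and yield the same product $h$. In practice, once uniqueness is secured, the codeword $f$ is recovered from the quotients $h(P)/g(P)$ taken at any $P\in T(\Fq)$ with $g(P)\neq 0$.
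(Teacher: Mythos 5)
Your proof is correct and follows essentially the same route as the paper: both arguments observe that the nonzero codeword of $C_{\tilde{\square}}$ determined by $g$ has weight at least $d(C_{\tilde{\square}})$, so $g$ vanishes on at most $n-d(C_{\tilde{\square}})<d(C_{\square})$ points by hypothesis \emph{iii)}, which pins down the codeword of $f$. Your uniqueness step (the difference $f_1-f_2$ is supported in $Z(g)$ and hence has weight below $d(C_{\square})$) is just a slightly more explicit rendering of the paper's terse remark that $f$ is determined by its values at the points where $g\neq 0$.
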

\begin{proof}
As in the above proof, we have
\begin{equation}
g(P)y(P)-g(P)f(P)=0 \quad \mathrm{for}\quad P\in T(\Fq)
\end{equation}
Let $Z(g)$ be the zero-set of $g$ with $T\subseteq Z(g)$.
For $P \notin Z(g)$, we have $y(P)=f(P)$ and there are at least $d(C_{\tilde{\square}}) > n-d(C_{\square})$ such points by \emph{iii)}. This determines $f$ uniguely as it is determined by the values in $n-d(C_{\square})$ points.
\end{proof}

\begin{example}
Let $\square$ be the convex polytope with vertices $(0,0), (a,0)$ and $(0,a)$. Let  $\tilde{\square}$ be the convex polytope with vertices $(0,0), (b,0)$ and $(0,b)$. Their Minkowski sum $\square +\tilde{\square}$ is the convex polytope with vertices  $(0,0), (a+b,0)$ and $(0,a+b)$, see figure \ref{polytope2}.

From \cite[Theorem 1.3]{53ee7c6020b511dcbee902004c4f4f50}, we have that
$n=(q-1)^2,  \vert \tilde{\square}\vert = \frac{(b+1)(b+2)}{2}, d(C_{\square})=(q-1)(q-1-a), d(C_{\tilde{\square}})=(q-1)(q-1-b)$ and $ d(C_{\square+\tilde{\square}})=(q-1)(q-1-(a+b))
$ for the associated codes over $\mathbb F_q$.
 
Let $q=16, a=4$ and $b=8$. Then $n=225, \vert \tilde{\square}\vert =45, d(C_{\square})=165, d(C_{\tilde{\square}})=105$ and $ d(C_{\square+\tilde{\square}})=45
$.

As $d(C_{\tilde{\square}})=105>60=n-d(C_{\square})$, the procedure corrects $t$ errors with $t < \mathrm{Min}\left\lbrace d(C_{\square+\tilde{\square}}) , \vert \tilde{\square}\vert\right\rbrace=45$.

\begin{figure}
\caption{The convex polytope ${\square}$ with vertices $(0,0), (a,0)$ and $(0,a)$. The convex polytope ${\tilde{\square}}$ with vertices $(0,0), (b,0)$ and $(0,b))$. Their Minkowski sum $\square+\tilde{\square}$ having vertices  $(0,0), (a+b,0)$ and $(0,a+b)$.}\label{polytope2}
\begin{center}
\begin{tikzpicture}[scale=0.35]

\draw[very thick] (0,0)--(14,0);
\draw[very thick] (0,0)--(0,14);
\draw[thick, fill=black!5] (0,0)--(10,0)--(0,10)--(0,0);
\draw[thick,fill=black!10] (0,0)--(12,0)--(0,12)--(0,0);
\draw (12,0) node[anchor=north]{$a+b$};
\draw (0,12) node[anchor=east]{$a+b$};
\draw[thick,fill=black!10] (0,0)--(8,0)--(0,8)--(0,0);
\draw[thick,fill=black!15] (0,0)--(4,0)--(0,4)--(0,0);
\draw (8,0) node[anchor=north]{$b$};
\draw (0,8) node[anchor=east]{$b$};
\draw (4,0) node[anchor=north]{$a$};
\draw (0,4) node[anchor=east]{$a$};
\draw (4,0) node[anchor=north]{$a$};
\draw (0,4) node[anchor=east]{$a$};
\draw (14,0) node[anchor=west]{$q-2$};
\draw (0,14) node[anchor=east]{$q-2$};
\draw[step=1cm,gray,very thin,dashed]
(0,0) grid (14,14);
\end{tikzpicture}
\end{center}
\end{figure}
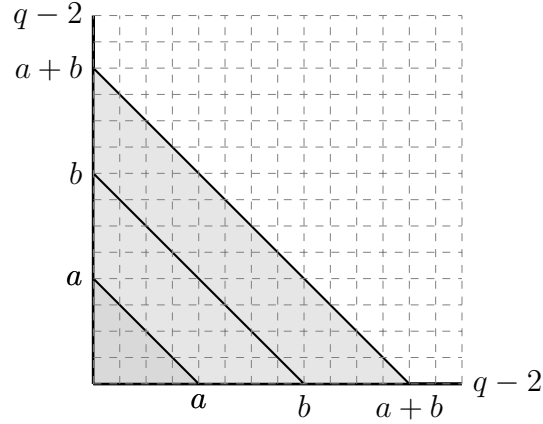
\end{example}
\subsection{Error correcting pairs}
R. Pellikaan \cite{MR1181934} and  R. K{\"o}tter \cite{RK} introduced the concept of error correcting pairs for a linear code, see also I. M\'arquez-Corbella and R. Pellikaan  Ruud\cite{MR3502016}. Specifically for a linear code $C \subseteq  \mathbb F_q^n$  an $t$-error correcting pair consists of two linear codes $A,B \subseteq  \mathbb F_q^n$, such that
\begin{equation}
(A\star B) \perp C, \dim_{\mathbb F_q} A > t, d(B^\perp)>t, d(A)+d(C)>n
\end{equation}
Here $A\star B = \{a \star b \vert\ a \in A, b \in B\}$ and 
$\perp$ denotes ortogonality with respect to the usual inner product. They described the known decoding algorithms  for decoding $t$ or fewer errors in this framwork.

Also the decoding in the present paper can be described in this framework, taking $C=C_{\square}, A=C_{\tilde{\square}}$ and $B=(C\star A)^{\perp}$ using Proposition \ref{orto}.

\subsubsection{Orthogonality - dual code}
In Proposition \ref{orto} we present the dual code of $C=\pi_S(\Fq<U>)$.

Let $U \subseteq M$ be a subset, define its opposite as $-U:= \{-u \vert \ u \in U \} \subseteq M$.
The opposite maps the monomial $X^u$ to $X^{-u}$ and induces by linearity an isomorphism of vector spaces 
\begin{eqnarray*}
\Fq<U>&\rightarrow& \Fq<-U>\\
X^u&\mapsto & X^{-u}\\
f&\mapsto&\hat{f}\ .
\end{eqnarray*}
On $\Fq^{\vert T(\Fq) \vert}$, we have the usual inner product 
\begin{equation}
(a_0,\dots,a_n)\cdot(b_0,\dots,b_n)=\sum_{l=0}^n a_l b_l \in \Fq\ ,
\end{equation}
with $n=\vert T(\Fq) \vert-1$.
\begin{lemma} Let $f,g \in \Fq <M> $ and assume $f \neq \hat{g}$, then
\begin{equation}
\pi_{T(\Fq)}(f)\cdot \pi_{T(\Fq)}(g)= 0 
\end{equation}
\end{lemma}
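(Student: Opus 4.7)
The plan is to reduce to the monomial case by bilinearity and then evaluate the resulting character sum. Since both $\pi_{T(\Fq)}$ and the standard inner product are $\Fq$-bilinear in $(f,g)$, it suffices to verify the claim when $f=X^u$ and $g=X^v$ are monomials, and then extend by linearity over the monomial decompositions of arbitrary $f,g\in \Fq\langle M\rangle$.

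For the monomial case, I would compute directly
\begin{equation*}
\pi_{T(\Fq)}(X^u)\cdot \pi_{T(\Fq)}(X^v)=\sum_{P\in T(\Fq)} P^{u}P^{v}=\sum_{P\in (\F)^r} P^{u+v}=\prod_{i=1}^{r}\Bigl(\sum_{x\in \F} x^{u_i+v_i}\Bigr),
\end{equation*}
where the product form follows from writing $P=(x_1,\dots,x_r)$ and factoring the sum over the product set. The key arithmetic input is the elementary identity
\begin{equation*}
\sum_{x\in \F} x^{k} \;=\; \begin{cases} -1 & \text{if } (q-1)\mid k,\\ 0 & \text{otherwise,}\end{cases}
\end{equation*}
which one gets from the fact that $\F$ is cyclic of order $q-1$, so the sum is a geometric sum in a primitive root $\alpha$ and telescopes. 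Thus the full product vanishes as soon as $(q-1)\nmid (u_i+v_i)$ for at least one index $i$.

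It remains to connect the monomial vanishing condition to the hypothesis $f\neq \hat g$. The monomial $\hat{X^v}$ is $X^{-v}$, so $X^u=\hat{X^v}$ precisely when $u=-v$, i.e.\ $u+v=0$. Since the elements of $U$ (and hence of $M$ appearing in the relevant expansions of $f$ and $g$) lie in a fundamental domain for the $(q-1)$-action in each coordinate — this is the standing convention in the toric-code setup of the paper, since two exponents differing by a multiple of $q-1$ in any coordinate give identical evaluation on the torus — the condition $u+v\neq 0$ is equivalent to $(q-1)\nmid u_i+v_i$ for some $i$. So the character-sum calculation forces the inner product to be zero whenever the monomials appearing in $f$ and $\hat g$ are distinct, and the bilinear extension concludes the proof.

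The main subtlety, and the only place one must be careful, is the last step: the hypothesis $f\neq \hat g$ must be interpreted so that no cross-term $X^u\cdot X^v$ in the expansion of $f\cdot g$ has $u+v\equiv 0$ modulo $(q-1)\Z^r$. In the toric setup used in the paper (where exponents are taken from a polytope inside the fundamental domain $\{0,1,\dots,q-2\}^r$), this follows automatically from $f\neq \hat g$; outside that setting it has to be built into the hypothesis. This is the one spot where I expect the statement to need a brief reminder of the convention on $U\subseteq M$ before the character-sum computation goes through cleanly.
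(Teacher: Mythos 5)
The paper states this lemma without proof, so there is no argument of the authors' to compare against line by line; your character--sum computation (factor the sum over $T(\Fq)=(\F)^r$ into a product of geometric sums over $\F$, each equal to $-1$ if $q-1$ divides the exponent and to $0$ otherwise) is surely the intended one, and it is correct for monomials. You are also right to worry about the hypothesis: as literally stated for arbitrary $f,g\in \Fq<M>$ the lemma is false, since bilinearity only reduces the inner product to a sum over cross-terms $X^uX^v$, and $f\neq\hat g$ does not prevent an individual cross-term from having $u+v\equiv 0$. For instance with $r=1$, $f=X+X^2$ and $g=X^{-1}+X^{-3}$ one has $\hat g=X+X^3\neq f$, yet the inner product is $\sum_{x\in\F}1=-1\neq 0$ for $q\geq 4$. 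So restricting to monomials and reinterpreting the hypothesis is the right instinct.

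The genuine gap is that your proposed repair of the hypothesis is itself incorrect. You assert that if all exponents lie in the fundamental domain $\{0,\dots,q-2\}^r$, then $f\neq\hat g$ forces $(q-1)\nmid(u_i+v_i)$ for some $i$. Counterexample with $r=1$: take $f=X$ and $g=X^{q-2}$, both exponents in $\{0,\dots,q-2\}$; then $\hat g=X^{2-q}\neq f$, but $\pi_{T(\Fq)}(f)\cdot\pi_{T(\Fq)}(g)=\sum_{x\in\F}x^{q-1}=q-1=-1\neq 0$. What you actually need is that $u$ and $-v$ lie in the \emph{same} fundamental domain, i.e.\ $u\in H$ and $v\in -H$: then every coordinate of $u+v=u-(-v)$ lies in $\{-(q-2),\dots,q-2\}$, an interval containing no nonzero multiple of $q-1$, so $u+v\equiv 0 \pmod{(q-1)\Z^r}$ holds iff $u=-v$, i.e.\ iff $X^u=\widehat{X^v}$. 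This is exactly the situation in which the lemma is invoked in Proposition~\ref{orto}, where $f$ has exponents in $U\subseteq H$ and $g$ has exponents in $-H\setminus{-U}\subseteq -H$. With the lemma restated for monomials $X^u$, $X^v$ with $u\in H$ and $v\in -H$, your computation closes the proof.
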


Let 
\begin{equation}
H =\{0,1,\dots,q-2\}\times \dots \times \{0,1,\dots,q-2\} \subset M\ .
\end{equation}

With this inner product we obtain the following proposition, e.g. \cite[Proposition 3.5]{MR2377234} and \cite[Theorem 6]{MR2499927}.
\begin{prop}\label{orto}
Let $U \subseteq H$ be a subset. Then we have
\begin{itemize}
\item[{\it i)}] For $f \in \Fq<U>$ and $g \notin \Fq<-H\setminus -U>$, we have that
$\pi_{T(\Fq)}(f)\cdot \pi_{T(\Fq)}(g)=0$. 

\item[{\it ii)}] The orthogonal complement to $\pi_{T(\Fq)}(\Fq<U>)$ in $\Fq^{\vert {T(\Fq)} \vert}$ is 
\begin{equation}
\pi_{T(\Fq)}(\Fq<-H \setminus -U>)\ ,
\end{equation}
i.e., the dual code of $C=\pi_{T(\Fq)}(\Fq<U>)$ is $\pi_{T(\Fq)}(\Fq<-H \setminus -U>)$.
\end{itemize}
\end{prop}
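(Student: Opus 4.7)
My approach rests on two ingredients: the character-sum identity $\sum_{x\in\Fq^*}x^w=-1$ in $\Fq$ when $(q-1)\mid w$ and $0$ otherwise, and the fact that $\pi_{T(\Fq)}$ restricted to $\Fq<H>$ is an isomorphism onto $\Fq^{\vert T(\Fq)\vert}$.

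First I would evaluate the inner product on pairs of monomials. Writing $T(\Fq)=(\Fq^*)^r$ and $P^w=\prod_iP_i^{w_i}$ one gets
\[
\pi_{T(\Fq)}(X^u)\cdot\pi_{T(\Fq)}(X^v)=\sum_{P\in T(\Fq)}P^{u+v}=\prod_{i=1}^r\Bigl(\sum_{x\in\Fq^*}x^{u_i+v_i}\Bigr),
\]
which equals $(-1)^r$ when $(q-1)\mid(u_i+v_i)$ for every $i$ and vanishes otherwise. For $u\in H$ and $v\in -H$ the coordinate sums lie in $\{-(q-2),\dots,q-2\}$, so the divisibility condition collapses to $v=-u$. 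Consequently, whenever $u\in U$ and $v\in -H\setminus -U$ the monomial inner product vanishes, and extending by bilinearity both establishes claim (i) and gives the inclusion $\pi_{T(\Fq)}(\Fq<-H\setminus -U>)\subseteq\pi_{T(\Fq)}(\Fq<U>)^{\perp}$ needed for (ii).

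To promote this inclusion to equality I would match dimensions. The restriction of $\pi_{T(\Fq)}$ to $\Fq<H>$ is injective: a kernel element is a nonzero Laurent polynomial whose monomial exponents lie in $\{0,\dots,q-2\}^r$ and which vanishes on all of $(\Fq^*)^r$, and induction on $r$ (factoring out one variable at a time) reduces this to the one-variable fact that a polynomial of degree at most $q-2$ with $q-1$ distinct roots in $\Fq^*$ must be zero. Since $\vert H\vert=(q-1)^r=\vert T(\Fq)\vert$, this restriction is actually an isomorphism, so $\dim\pi_{T(\Fq)}(\Fq<U>)=\vert U\vert$ and $\dim\pi_{T(\Fq)}(\Fq<-H\setminus -U>)=\vert H\vert-\vert U\vert$, summing to the ambient dimension and forcing equality in (ii).

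The only genuinely nontrivial step is the injectivity of $\pi_{T(\Fq)}|_{\Fq<H>}$; everything else is routine bookkeeping with the character-sum identity, and the conclusion about the dual code drops out of the dimension count.
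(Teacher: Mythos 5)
Your argument is correct and is essentially the approach the paper itself relies on: the character-sum evaluation on pairs of monomials is exactly the content of the unproved Lemma stated just before Proposition~\ref{orto}, and the cited references complete part \emph{ii)} by the same dimension count resting on injectivity of evaluation on $\Fq<H>$. Two cosmetic points only: part \emph{i)} of the statement should read $g \in \Fq<-H\setminus -U>$ rather than $g \notin \Fq<-H\setminus -U>$ (a typo you implicitly correct), and your dimension count actually needs injectivity of $\pi_{T(\Fq)}$ on $\Fq<-H>$ rather than on $\Fq<H>$, which follows at once because $X_i \mapsto X_i^{-1}$ induces a permutation of the points of $T(\Fq)$ and hence only permutes the coordinates of the evaluation vectors.
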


%% The Appendices part is started with the command \appendix;
%% appendix sections are then done as normal sections
%% \appendix

%% \section{}
%% \label{}

%% If you have bibdatabase file and want bibtex to generate the
%% bibitems, please use
%%
\section*{References}
\bibliographystyle{elsarticle-num} 
\bibliography{SSS}

%% else use the following coding to input the bibitems directly in the
%% TeX file.

%\begin{thebibliography}{00}

%% \bibitem{label}
%% Text of bibliographic item

%\bibitem{}
%\end{thebibliography}
\end{document}